%
\documentclass[runningheads]{llncs}
\usepackage[T1]{fontenc}
\usepackage{amsmath}
\usepackage{booktabs}
\usepackage{multirow}
\usepackage{subcaption}
\usepackage{hyperref}
\usepackage{amssymb}
\usepackage{wrapfig}
\usepackage{orcidlink}

%
\usepackage{graphicx}
%
%
\begin{document}
\title{L-FUSION: 
Laplacian Fetal Ultrasound Segmentation \& Uncertainty Estimation}
\titlerunning{L-FUSION}
%
\author{Johanna P. M\"uller\inst{1}\orcidlink{0000-0001-8636-7986} \and
Robert Wright\inst{2}\orcidlink{0000-0003-0437-7430} \and
Thomas G. Day\inst{2}\orcidlink{0000-0001-8391-7903} \and
Lorenzo Venturini\inst{2}\orcidlink{0000-0002-7951-6514} \and
Samuel F. Budd\inst{2}\orcidlink{0000-0002-9062-0013} \and
Hadrien Reynaud\inst{3}\orcidlink{0000-0003-0261-2660} \and
Joseph V. Hajnal\inst{2}\orcidlink{000-0002-2690-5495} \and
Reza Razavi\inst{2}\orcidlink{0000-0003-1065-3008} \and
Bernhard Kainz\inst{1,3}\orcidlink{0000-0002-7813-5023}}

\authorrunning{M\"uller et al.}
%
\institute{Friedrich–Alexander University Erlangen–N\"urnberg, DE \\
\email{johanna.paula.mueller@fau.de} \\
\and King's College London, London, UK \and
Imperial College London, London, UK 
}
\maketitle              
\begin{abstract}
Accurate analysis of prenatal ultrasound (US) is essential for early detection of developmental anomalies. However, operator dependency and technical limitations (\emph{e.g.} intrinsic artefacts and effects, setting errors) can complicate image interpretation and the assessment of diagnostic uncertainty.
We present L-FUSION (Laplacian Fetal US Segmentation with Integrated FoundatiON models), a framework that integrates uncertainty quantification through unsupervised, normative learning and large-scale foundation models for robust segmentation of fetal structures in normal and pathological scans. We propose to utilise the aleatoric logit distributions of Stochastic Segmentation Networks and Laplace approximations with fast Hessian estimations to estimate epistemic uncertainty only from the segmentation head. This enables us to achieve reliable abnormality quantification for instant diagnostic feedback.
Combined with an integrated Dropout component, L-FUSION enables reliable differentiation of lesions from normal fetal anatomy with enhanced uncertainty maps and segmentation counterfactuals in US imaging. It improves epistemic and aleatoric uncertainty interpretation and removes the need for manual disease-labelling. Evaluations across multiple datasets show that L-FUSION achieves superior segmentation accuracy and consistent uncertainty quantification, supporting on-site decision-making and offering a scalable solution for advancing fetal ultrasound analysis in clinical settings. Code is available at \url{https://github.com/ividja/L-FUSION}.

\keywords{Fetal Ultrasound \and Foundation Models \and Uncertainty Quantification \and Segmentation}
\end{abstract}
\section{Introduction}

Fetal ultrasound (US) is a key tool for monitoring development and detecting abnormalities early, with detection rates improving from $15\text{--}59\%$ in the 2000s to $30\text{--}88\%$ today, depending on the lesion type~\cite{10.1002/pd.6613}. However, US imaging remains challenging due to noise, shadows, and anatomical ambiguity. Machine learning models can support clinicians by providing consistent and objective assessments, with detection performance often matching expert evaluation~\cite{Day2024.05.23.24307329,venturini2025whole}.
Yet, standard deterministic models offer no confidence estimates, limiting trust and interpretability in safety-critical settings. Uncertainty quantification helps flag unreliable predictions and guide clinical attention. Current approaches, however, face two main challenges: (1) high computational cost, especially for high-dimensional imaging~\cite{zou2023review}, and (2) the need for large, diverse datasets to capture epistemic uncertainty, critical for identifying unfamiliar or out-of-distribution (OOD) cases~\cite{hullermeier2021aleatoric,NIPS2017_2650d608}.
Laplacian approximations offer a computationally efficient method for capturing both epistemic and aleatoric uncertainty, especially when paired with fast Hessian estimates. Meanwhile, pretrained US foundation models~\cite{jiao2024usfm} address data scarcity by enabling transfer learning across domains. When integrated, these components provide accurate, generalisable segmentation with reliable uncertainty estimates, supporting robust and scalable diagnostic systems for real-world fetal US.
\noindent\textbf{Contributions.}  
We propose \textbf{L-FUSION}, a novel framework for fetal ultrasound segmentation that integrates foundation model embeddings with two complementary uncertainty-aware heads. Our key contributions are:
\noindent\textbf{(1)} We combine a deterministic ultrasound foundation encoder with a Laplacian head for calibrated uncertainty and a Dropout head for counterfactual segmentations, enabling joint aleatoric and epistemic uncertainty quantification. This improves segmentation accuracy and unsupervised OOD detection.
\noindent\textbf{(2)} We reduce computational cost by applying Laplace approximation only to the segmentation head and using efficient Hessian approximations. The frozen encoder and lightweight heads ensure scalability across diverse clinical ultrasound tasks.
\noindent\textbf{(3)} We introduce the first 3D Laplacian segmentation model for ultrasound clips, enabling spatiotemporal extension of our method. In the absence of 3D foundation models, we leverage the 2D encoder as a proxy.

\noindent\textbf{Related Work.}  
Automated ultrasound segmentation lags behind other imaging modalities due to inconsistent metrics, smaller datasets, and high inter- and intra-expert variability~\cite{1661695}. However, growing emphasis on rigorous validation, cross-method comparisons, and larger standardised datasets aims to support clinical uptake~\cite{1661695}. Recent advances incorporate systematic priors, and geometric, temporal, and physics-based constraints to boost accuracy~\cite{ning2021smu,xu2021exploiting,lee2022speckle,gong2023thyroid}.
Foundation models pretrained on large-scale natural images have been adapted for medical imaging, including ultrasound, though cross-modality generalisation remains limited~\cite{ZHANG2024102996}. The Multi-Organ FOundation (MOFO) model~\cite{chen2024multi} enhances segmentation by learning organ-invariant representations and anatomical priors, outperforming single-organ approaches. Similarly, the Universal US Foundation Model (USFM)~\cite{jiao2024usfm} improves generalisation across organs, devices, and centres via self-supervised spatial-frequency dual-masked image modelling on a large, diverse dataset.
While foundation models offer rich features, they lack inherent uncertainty quantification. Bayesian methods provide probabilistic segmentation, estimating epistemic and aleatoric uncertainty~\cite{baumgartner2019phiseg,chen2022medical,gao2023bayeseg}. Laplace approximations efficiently estimate epistemic uncertainty by approximating the posterior, avoiding costly Hessian computation~\cite{NEURIPS2021_a7c95857}. Recent work combines aleatoric logit distributions from Stochastic Segmentation Networks~\cite{monteiro2020stochastic} with Laplace approximations for comprehensive uncertainty quantification~\cite{10.1007/978-3-031-72111-3_33}.

\begin{figure}[h]
    \centering
    \includegraphics[width=\textwidth, trim=0 220 0 0, clip]{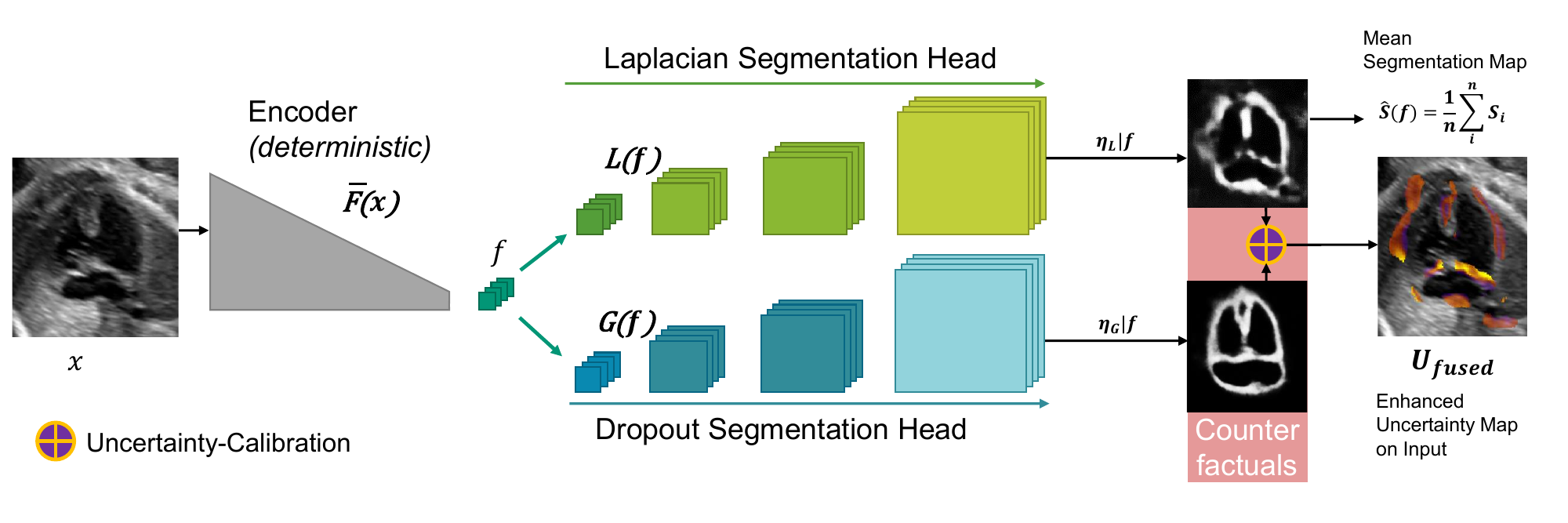}
    \caption{Foundation model (Encoder) with Laplacian segmentation head $L(f)$ and a Dropout segmentation path. Here, $L(f)$ and Dropout Segmentation Head $G(f)$ decoding the embeddings $f = \bar{F}(x)$. The segmentations $S_i(f)$ are predicted by $L(f)$. We enhance the uncertainty measure maps $U_{L/D,k}$ with Uncertainty-Calibration as $U_{fused}$ by including the variance and total entropy of logits.}
    \label{fig:laplace-scheme}
\end{figure}

\section{Method}
We introduce L-FUSION, a probabilistic segmentation framework for fetal ultrasound that combines a frozen foundation model, \emph{e.g.}, USFM~\cite{jiao2024usfm}, with two complementary uncertainty-aware segmentation heads: a Laplacian-based branch and a Dropout-based branch~\cite{gal2016dropout}. By jointly modelling both epistemic and aleatoric uncertainty, L-FUSION produces more reliable predictions and highlights regions of ambiguity, particularly useful for clinical interpretation and decision-making. An overview of the framework is shown in Fig.~\ref{fig:laplace-scheme}.

\noindent\textbf{Feature Encoding} L-FUSION begins by passing an input $2D$ ultrasound image $x$ through a pretrained foundation encoder $\bar{F}$, \emph{e.g.}, USFM~\cite{jiao2024usfm}. The encoder remains frozen during training to preserve the generalisable representations learned during pretraining and to reduce the computational burden associated with fine-tuning the encoder as well. The output feature embedding $f = \bar{F}(x)$ is shared by both segmentation branches. This design choice decouples representation learning from downstream uncertainty estimation, ensuring that all uncertainty arises from the segmentation heads rather than encoder variability.

\begin{theorem}[Sufficiency of Pretrained Embeddings for Segmentation and Uncertainty Quantification] 
Let $x \in \mathcal{X}$ be an input image, and let $\bar{F}: \mathcal{X} \to \mathcal{F}$ be a deterministic, pretrained foundation model mapping $x$ to a feature embedding $f = \bar{F}(x)$. Consider a Laplacian segmentation head $L$ that models the posterior segmentation distribution as  
$p(S | f, \theta) = \prod_{i} \text{Bernoulli}(s_i | \sigma(\eta_{L,i})),$ 
where the latent logits $\eta$ follow a Laplace-approximated posterior:  
$p(\eta | f, \theta) = \mathcal{N}(\eta | \mu_\theta(f), PP^\top + D),$
with $\mu_\theta(f)$ the mean prediction, and $PP^\top + D$ the covariance matrix.  
If $\bar{F}$ preserves all segmentation-relevant information, i.e., using the mutual information
$I(S; x) = I(S; f),$  
then $f$ is a \emph{sufficient statistic} for estimating segmentation and associated predictive uncertainty.  
\end{theorem}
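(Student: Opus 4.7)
The plan is to establish sufficiency in the classical sense, namely that $p(S \mid x,\theta) = p(S \mid f,\theta)$ for $f = \bar F(x)$, and then argue that the predictive uncertainty summaries extracted by the Laplacian head depend on $x$ only through $f$. The central tool is the equality case of the data processing inequality applied to the Markov chain $S \to x \to f$, which is valid because $\bar F$ is deterministic and $f$ is therefore a measurable function of $x$.

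First I would invoke the chain rule for mutual information, which for any joint distribution satisfying the deterministic map $f = \bar F(x)$ yields
\[
I(S;x) = I(S;f) + I(S;x \mid f).
\]
The hypothesis $I(S;x) = I(S;f)$ forces $I(S;x \mid f) = 0$, which is equivalent to the conditional independence $S \perp x \mid f$. By the Fisher--Neyman factorisation criterion this is precisely the statement that $f$ is a sufficient statistic for $S$, so $p(S \mid x) = p(S \mid f)$ almost surely.

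Next I would lift this pointwise sufficiency into the parametric Laplacian model. Because $p(S \mid f,\theta) = \prod_i \mathrm{Bernoulli}(s_i \mid \sigma(\eta_{L,i}))$ with $\eta \sim \mathcal{N}(\mu_\theta(f), PP^\top + D)$, the marginal $p(S \mid f,\theta) = \int p(S \mid \eta)\, p(\eta \mid f,\theta)\, d\eta$ is a functional solely of $f$. Composing with the sufficiency identity gives $p(S \mid x,\theta) = p(S \mid f,\theta)$, so any expectation $\mathbb{E}[g(S) \mid x,\theta]$ used downstream is computable from $f$ alone. For the uncertainty quantities I would invoke the standard Bayesian decomposition of the predictive entropy into an expected conditional entropy (aleatoric component) and a mutual information term (epistemic component): both are functionals of the joint $p(S,\eta \mid f,\theta)$, so sufficiency of $f$ transfers verbatim to the uncertainty estimates, including the variance term contributed by $PP^\top + D$.

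The main obstacle I anticipate is that the stated hypothesis is a marginal information equality, while the Laplacian head couples $S$ and $\theta$ through the approximate posterior, so one formally needs the conditional form $I(S;x \mid \theta) = I(S;f \mid \theta)$ for the Markov chain argument to go through uniformly in $\theta$. I would bridge this either by assuming that stronger conditional equality at the outset, or by observing that the Laplace posterior over $\eta$ is defined directly from $\mu_\theta(f)$ and the head-level Hessian, so no additional dependence on $x$ enters the model and sufficiency holds by construction once the marginal information-theoretic equality is assumed. Routine measurability and regularity assumptions on $\bar F$, $\mu_\theta$, and the covariance $PP^\top + D$ would be stated up front but not belaboured.
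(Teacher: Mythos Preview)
Your proposal is correct and follows the same overall arc as the paper: use the information-preserving hypothesis together with the deterministic encoder to establish that $f$ is sufficient for $S$, then argue that the uncertainty quantities produced by the head depend on the input only through $f$. Where the paper simply states the data processing inequality and asserts that equality implies sufficiency, you make this explicit via the chain rule $I(S;x) = I(S;f) + I(S;x\mid f)$ and the resulting conditional independence, which is a cleaner justification. The one substantive difference is in the uncertainty step: the paper decomposes predictive \emph{variance} by linearising the model output around $\theta_{\mathrm{MAP}}$, obtaining $\operatorname{Var}(S\mid f) \approx J \mathbf{H}^{*-1} J^\top + \sigma^2(f) I$ and reading off the Jacobian--Hessian term as epistemic and the likelihood-variance term as aleatoric, whereas you decompose predictive \emph{entropy} into expected conditional entropy plus mutual information. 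Both routes work since each is a functional of quantities defined from $f$; yours avoids the linearisation approximation, while the paper's connects more directly to the low-rank covariance $PP^\top + D$ appearing in the statement. The obstacle you flag about marginal versus $\theta$-conditional mutual information is real and the paper simply glosses over it; your second resolution---that sufficiency holds by construction because the head is defined from $f$---is the right way to close the gap.
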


\begin{proof} 
Since $\bar{F}$ is deterministic, the distribution over $f$ given $x$ reduces to a delta function:
$p(f | x) = \delta(f - \bar{F}(x)),$
so the segmentation posterior satisfies
$p(S | x) = p(S | f, \theta).$
By the Data Processing Inequality~\cite{thomas2006elements},  
$I(S; x) \geq I(S; f).$
We assume $I(S; x) = I(S; f),$ i.e., that no information about $S$ is lost in passing from $x$ to $f$. This implies that $f$ is a \emph{sufficient statistic} for $S$ with respect to $x$.
Now, consider uncertainty estimation. The Laplace approximation assumes a Gaussian posterior over parameters:  
$q(\theta^*) = \mathcal{N}(\theta^* | \theta_{\text{MAP}}, \mathbf{H}^{*-1}),$
where $\mathbf{H}^*$ is the Hessian of the negative log-likelihood at the MAP estimate. Predictive uncertainty is then propagated via linearisation of the model output $\eta$ around $\theta_{MAP}$, yielding:
$\operatorname{Var}(S | f) \approx J \mathbf{H}^{*-1} J^\top + \sigma^2(f) I,$
where $J$ is the Jacobian of $\mu_\theta(f)$ for $\theta$, evaluated at $\theta_{MAP}$, and $\sigma^2(f)$ is the predictive variance of the likelihood model (e.g., from $D$ in the Laplace covariance).
The first term corresponds to epistemic uncertainty, and the second to aleatoric uncertainty. Since both terms depend only on $f$, and $f$ is sufficient for $S$, it follows that $f$ is also sufficient for segmentation uncertainty estimation.
\end{proof}

\noindent\textbf{Laplacian Segmentation Path.}
The first segmentation branch estimates uncertainty using a Laplace approximation centred on the MAP estimate of parameters $\theta_{\text{MAP}}$. Given embedding $f$ from the frozen encoder, the Laplacian segmentation head $L$ models logits as a Gaussian distribution, $\eta_L \mid f \sim \mathcal{N}(\mu(f, \theta_1), $ $\Sigma(f, \theta_2))$, where $\theta_1$ and $\theta_2$ govern the mean and covariance.
Segmentation masks $S_i$ are sampled from Bernoulli distributions parameterised by the sigmoid of each logit $\eta_{L,i}$, for $i = 1, \ldots, n$ samples per input. To capture aleatoric uncertainty from data noise and variability, Monte Carlo (MC) sampling~\cite{shapiro2003monte} is performed over $m$ segmentation networks $L_j(f)$, $j = 1, \ldots, m$, drawn from the Laplace approximation $q(\theta^*)$ around the MAP estimate, using embedding $f = \bar{F}(x)$.
Each sampled network outputs a logits distribution $p(\eta \mid f, \theta)$. Aggregating predictions yields the mean segmentation map $\hat{S}_L$ and uncertainty measures $U_{L,k}$, $k=1, \ldots, K$. 
$U_{L,k}^E$ quantifies epistemic uncertainty via mutual information $\mathcal{I}(p(S, \theta^* \mid f, \mathcal{F}))$ over embeddings $\mathcal{F}$, highlighting prediction unreliability. Conversely, $U_{L,k}^A$ captures aleatoric uncertainty by computing expected entropy $\mathcal{H}(p(S \mid f, \theta^*))$, reflecting noise inherent in imaging or anatomy.
For efficiency, the Laplace approximation uses a low-rank Hessian parametrisation. With low-rank matrix $P$, the predictive distribution is $p(\eta \mid f, \theta) = \mathcal{N}(\eta \mid \mu_\theta, PP^\top + D)$, where $D$ is diagonal. This follows methods in~\cite{monteiro2020stochastic,10.1007/978-3-031-72111-3_33}.

\noindent\textbf{Dropout Segmentation Path.}
To complement the Laplacian segmentation head, we introduce a Dropout-based segmentation path. The Laplace approximation fits a Gaussian distribution over $\theta_{\text{MAP}}$, producing a Gaussian predictive distribution for segmentation logits and masks. This yields well-calibrated but unimodal uncertainty estimates, reflecting smooth local variations from parameter uncertainty near the mode~\cite{graves2011practical}. However, it may miss multiple plausible segmentation hypotheses or modes, especially with ambiguous or OOD inputs.
In contrast, Dropout adds stochasticity by randomly dropping units, effectively sampling from a richer, more diverse space of model configurations, enabling a multimodal predictive distribution over segmentations. Thus, we employ a Dropout Segmentation Head $G(f)$ (Fig.~\ref{fig:laplace-scheme}), based on MC-dropout~\cite{kendall2015bayesian}, which takes embeddings from the Foundation Model encoder as input. From this path, we obtain the logit distribution $\eta_{\text{MC}}$ for sampling logits and a set of uncertainty measures $U_{D,k}$.

\noindent\textbf{Counterfactuals and Uncertainty Calibration.}
We propose a hybrid uncertainty estimation strategy that fuses outputs from two complementary segmentation heads: Laplacian and Dropout. The Laplacian head produces calibrated, unimodal predictions near the MAP estimate, while the Dropout head captures diverse plausible alternatives as natural counterfactuals.
Uncertainty is quantified via intra- and inter-path logit variance, measuring disagreement within and between heads. To emphasise clinically meaningful ambiguity, this disagreement is scaled by total entropy through a Hadamard product, suppressing low-signal noise and highlighting regions where both variance and entropy indicate instability, especially under OOD conditions.
Though Dropout can be overconfident in OOD scenarios~\cite{ovadia2019can}, we leverage this by fusing it with the Laplacian path to enhance OOD sensitivity. The fused uncertainty map is defined as
$ U_{\text{fused}} = \sqrt{\mathrm{Var}(\eta^{L} - \eta^{D})} \circ \mathcal{H}(\eta^{L} - \eta^{D}) \cdot \mathrm{Var}(\tilde{U}_{L/D,k}),$
where $\eta^L$ and $\eta^D$ are logits from the Laplacian and Dropout paths, respectively, and $\tilde{U}_{L/D,k}$ is an auxiliary pixel-wise uncertainty measure (e.g., mutual information). Here, $\mathrm{Var}(\tilde{U}_{L/D,k})$ captures global pixel-level uncertainty dispersion as a weighting factor.
This formulation combines local disagreement, entropy, and global variation, amplifying signals in clinically ambiguous regions while suppressing noise. The fusion yields interpretable uncertainty maps that aid clinical decision-making by highlighting low-confidence areas from epistemic or aleatoric uncertainty, helping sonographers decide when to re-scan, refer, or flag abnormalities in challenging fetal ultrasound cases.

\section{Evaluation}
\noindent\textbf{Datasets.}  
We use two fetal ultrasound datasets: 
\textit{Fetal Heart (4CH)}: Our dataset includes $428$ normal fetal four-chamber (4CH) scans for training, annotated with six classes (heart chambers, heart, thorax). The test set contains $92$ normal and $193$ abnormal samples with Atrioventricular Septal Defect (AVSD), used for OOD evaluation.
\textit{HC18}~\cite{van2018automated}: A public dataset of 1,334 2D fetal head ultrasound images used for head circumference estimation. It includes $999$ annotated training images and $335$ unannotated test images, available via
(\href{DOI:10.5281/zenodo.1322001}{Zenodo}). OOD samples for HC18 are synthetically generated using elastic deformations to simulate local skull shape variations. Additionally, a synthetic variant of the HC18 dataset, denoted HC18$^\star$, is constructed by simulating frame-by-frame ultrasound sweeps using continuous affine transformations to emulate realistic temporal acquisition dynamics. Each sequence consists of 16 frames.

\noindent\textbf{Implementation and Training.}
We use USFM~\cite{jiao2024usfm}, pretrained via self-super-vised Masked Image Modelling (MIM) on the 3M-US dataset without labels or annotations. We extended the $nnj$ library~\cite{software:nnj} with BatchNorm to prevent exploding gradients and support higher-dimensional layers (\textit{Conv3d, Upsample3d, BatchNorm3d}). Images are resized to $256 \times 256$ for U-Nets and $224 \times 224$ for the USFM encoder~\cite{jiao2024usfm}. All U-Nets share the same architecture except Dropout U-Net~\cite{kendall2015bayesian}, which adds Dropout after each convolutional block. Training uses linear augmentations and random crops up to $80\%$. Fetal heart scans are aligned via an iterative 2D method~\cite{wright2023fast}. Models trained on an NVIDIA A100 80 GB with early stopping.

\noindent\textbf{Metrics.}  
Segmentation is evaluated using Dice (DSC), Hausdorff Distance (HD), and Absolute Difference (AD; average absolute error) for HC18, and Intersection over Union (IoU) for fetal heart datasets, measuring overlap and boundary accuracy. For OOD detection, following \cite{10.1007/978-3-031-72111-3_33}, we report AUROC/AUC over epistemic measures—Mutual Information (MI), Expected Pairwise KL (EP), Pixel Variance (PV)—and aleatoric uncertainty via Expected Entropy (EE).

\begin{table}[h!]
    \centering
    \caption{Results on Head Circumference Dataset HC18; DSC [\%] - Dice Similarity Coefficient, HD [mm] - Hausdorff Distance, AD [mm] - Absolute Difference; \textbf{1st-ranked}, \underline{2nd-ranked}.}
    \label{tab:hc18-seg}
    \resizebox{\textwidth}{!}{ %
    \begin{tabular}{lccccccccc|ccc}
        & \multicolumn{3}{c}{Mean} & \multicolumn{3}{c}{25-Percentile} & \multicolumn{3}{c}{75-Percentile} & \multicolumn{3}{c}{1. Trimester} \\
        \cmidrule(lr){2-4} \cmidrule(lr){5-7} \cmidrule(lr){8-10}\cmidrule(lr){11-13}
        Model    & DSC  $\uparrow$ & HD  $\downarrow$ & AD  $\downarrow$ & DSC  $\uparrow$ & HD  $\downarrow$ & AD  $\downarrow$ & DSC  $\uparrow$ & HD  $\downarrow$ & AD  $\downarrow$ & DSC  $\uparrow$ & HD  $\downarrow$ & AD  $\downarrow$ \\
        \cmidrule(lr){2-2} \cmidrule(lr){3-3} \cmidrule(lr){4-4} \cmidrule(lr){5-5} \cmidrule(lr){6-6} \cmidrule(lr){7-7}\cmidrule(lr){8-8} \cmidrule(lr){9-9} \cmidrule(lr){10-10}\cmidrule(lr){11-11} \cmidrule(lr){12-12} \cmidrule(lr){13-13}
        SSN~\cite{monteiro2020stochastic}  & $87.6_{\pm 15.0}$ & $8.4_{\pm 9.1}$  & $16.7_{\pm 22.5}$  & 85.7 & 1.9 & \underline{2.9} & 96.6 & 10.9 & 19.9  & $73.2_{\pm 23.5}$   & $13.7_{\pm 9.6}$  & $23.3_{\pm 29.6}$  \\
        Drop. U-Net~\cite{kendall2015bayesian}  & $\underline{94.3_{\pm 8.0}}$   & $\underline{4.1_{\pm 4.3}}$   & $\underline{9.2_{\pm 11.3}}$  & \underline{93.9} & \underline{1.8} & 3.2 & \underline{97.2} & \underline{4.9} & \underline{10.5} & $\underline{87.7_{\pm 17.6}}$ & $\underline{4.2_{\pm 7.3}}$  & $\underline{10.1_{\pm 21.0}}$ \\   
        Lap. U-Net~\cite{10.1007/978-3-031-72111-3_33}   & $85.4_{\pm 16.9}$   & $9.9_{\pm 11.7}$   & $21.9_{\pm 31.6}$   & 82.7 & 2.7 & 4.3 & 96.3 & 11.6 & 25.2 & $70.3_{\pm 22.2}$  & $11.3_{\pm 10.6}$   & $29.7_{\pm 30.1}$  \\
        L-FUSION (ours)       & $\mathbf{95.5_{\pm 8.0}}$  & $\mathbf{2.7_{\pm 3.0}}$  & $\mathbf{4.7_{\pm 6.2}}$  & $\mathbf{95.9}$ & $\mathbf{1.3}$ & $\mathbf{1.4}$ &$ \mathbf{97.7}$ & $\mathbf{2.9}$ & $\mathbf{5.6}$ & $\mathbf{89.0_{\pm 17.7}}$  & $\mathbf{3.4_{\pm 5.6}}$  & $\mathbf{6.4_{\pm 10.8}}$  \\
    \end{tabular}
    }
\end{table}

\noindent\textbf{Quantitive Results.}
L-FUSION outperforms all baselines on HC18 (Tab.~\ref{tab:hc18-seg}), achieving the highest DSC ($95.5\%$) and lowest HD ($2.7~mm$) and AD ($4.7~mm$), surpassing Dropout U-Net by $1.3\%$ DSC and reducing HD by $34.1\%$. Gains persist across percentiles and the first trimester, with DSC up $1.5\%$ and AD down $36.6\%$. It also shows the lowest standard deviations and best percentile metrics overall.
For fetal heart ultrasound (Tab.~\ref{tab:segmentation-metrics}), L-FUSION sets new benchmarks. In unaligned 4CH, it exceeds the best baseline by $12.8\%$ DSC and cuts HD by $41.6\%$. With thorax- and heart-alignment, it leads by $9.5\%$ and $0.9\%$ DSC, consistently achieving the lowest HD. Training time drops by up to $40\%$, with greater savings under stronger alignment, confirming its state-of-the-art status in fetal heart and head circumference segmentation.

\begin{table}[h!]
    \centering
    \caption{Segmentation Performance on Fetal Hearts over all class labels; Plane - Fetal heart US standard plane; \textit{A.} - Aligned to Thorax (T) or Heart (H); DSC [\%] - Dice Similarity Coefficient, IoU [\%] - Intersection over Union, HD [pixel] - Hausdorff Distance; \textbf{1st-ranked}, \underline{2nd-ranked}.}
    \label{tab:segmentation-metrics}
    \resizebox{1.0\columnwidth}{!}{
    \begin{tabular}{llccccccccccccc}
         & & & \multicolumn{3}{c}{SSN~\cite{monteiro2020stochastic} } & \multicolumn{3}{c}{Dropout U-Net~\cite{kendall2015bayesian}} & \multicolumn{3}{c}{Laplace U-Net}~\cite{10.1007/978-3-031-72111-3_33} & \multicolumn{3}{c}{L-FUSION (ours)} \\
        \cmidrule(lr){4-6} \cmidrule(lr){7-9} \cmidrule(lr){10-12} \cmidrule(lr){13-15}
                                  & Plane & \textit{A.} & DSC $\uparrow$   & IoU $\uparrow$   & HD $\downarrow$   & DSC $\uparrow$   & IoU $\uparrow$   & HD  $\downarrow$  & DSC $\uparrow$   & IoU $\uparrow$   & HD $\downarrow$   & DSC $\uparrow$   & IoU $\uparrow$   & HD $\downarrow$   \\
        \cmidrule(lr){4-4} \cmidrule(lr){5-5} \cmidrule(lr){6-6} \cmidrule(lr){7-7} \cmidrule(lr){8-8} \cmidrule(lr){9-9} \cmidrule(lr){10-10} \cmidrule(lr){11-11} \cmidrule(lr){12-12} \cmidrule(lr){13-13} \cmidrule(lr){14-14}\cmidrule(lr){15-15}
        \multirow{3}{*}{2D} & 4CH &  -  & $40.6_{\pm 10.4}$ & $26.1_{\pm8.8}$ &$ \underline{44.0_{\pm 7.1}}$  & $43.0_{\pm 9.0}$ & $27.9_{\pm 7.7}$ & $62.3_{\pm 16.9}$ & $\underline{43.1_{\pm 10.1}}$ & $\underline{28.0_{\pm 8.9}}$ & $49.8_{\pm9.0}$ & $\mathbf{55.9_{\pm 8.6}}$ & $\mathbf{39.3_{\pm 9.1}}$ & $\mathbf{29.1_{\pm 4.7 }}$ \\
        & 4CH          &   T  &  $47.7_{\pm 11.7}$ & $32.1_{\pm 10.4}$ &$ 63.4_{\pm 12.7}$  & $60.2_{\pm 5.9}$ & $43.3_{\pm 6.3}$ & $89.5_{\pm 42.3}$ & $\underline{61.2_{\pm 7.6}}$ & $\underline{44.5_{\pm 7.8}}$ & $\underline{37.0_{\pm 10.7}}$ & $\mathbf{67.0_{\pm 7.9}}$ & $\mathbf{50.9_{\pm 8.5}}$ & $\mathbf{36.9_{\pm 15.1 }}$ \\
        & 4CH          &  H   & $52.4_{\pm 12.2}$ & $36.5_{\pm 12.2}$ &$ \underline{56.1_{\pm 16.7}}$  & $\underline{65.0_{\pm 6.7}}$ & $\underline{48.5_{\pm7.8}}$ & $68.1_{\pm 26.0}$ & $55.9_{\pm 9.3}$ & $39.5_{\pm 9.9}$ & $56.8_{\pm 14.3}$ & $\mathbf{65.6_{\pm 6.8}}$ & $\mathbf{49.2_{\pm 7.9}}$ & $\mathbf{35.0_{\pm 10.7}}$ \\
        \midrule
        \multirow{2}{*}{Clip} & 4CH  & & $46.1_{\pm 12.4}$  & $30.9_{\pm 11.9}$ & $\underline{23.7_{\pm 4.9}}$ & $33.7_{\pm 12.9}$ & $21.1_{\pm 10.3}$ & $83.8_{\pm 32.4}$ & $\underline{53.7_{\pm 7.8}}$ & $\underline{37.1_{\pm 7.2}}$  & $23.8_{\pm 5.0}$ & $\mathbf{66.8_{\pm 8.1}}$  & $\mathbf{51.3_{\pm 7.5}}$ & $\mathbf{23.4_{\pm 5.2}}$  \\
        & 4CH        &   H  & $\underline{68.1_{\pm5.9}}$  & $\underline{51.9_{\pm6.6}}$ & $\mathbf{20.8_{\pm 3.7}}$ & $67.4_{\pm6.0 }$ & $51.1_{\pm 6.5}$ & $32.2_{\pm 15.6}$ & $\mathbf{71.2_{\pm 5.3}}$ & $\mathbf{55.5_{\pm 6.3}}$  & $\underline{21.3_{\pm 3.9}}$ & $65.6_{\pm 6.4}$  & $49.1_{\pm 6.9}$ & $22.3_{\pm 3.9}$ \\
    \end{tabular}}
\end{table}

\noindent Tab.~\ref{tab:ood_auc} compares top individual OOD measures for epistemic and aleatoric uncertainty against Uncertainty-Calibration, which scales predicted logit variance ($n=50$) to improve OOD detection AUC. Applied intra- or inter-path, Uncertain-ty-Calibration significantly boosts AUC. Ablation shows inter-path calibration raises AUC by $19\%$ (HC18) and $20\%$ (4CH) on unaligned data; for aligned 4CH, gains are $3$–$9\%$. Segmentation runs at a minimum 45 FPS. The best uncertainty measure is identified once at deployment or major data shifts via held-out testing. Pixel Variance, Mutual Information, and Expected Entropy emerge as most informative. With calibrated maps and $n=20$ samples, inference speed stays above 40 FPS.

\begin{table}[h!]
    \centering
    \caption{AUC for OOD Detection with Uncertainty-Calibration (identification of best-scoring uncertainty measure $\tilde{U}_{L/D,k}$): $^\circ$Expected Entropy, $^\diamond$Expected Pairwise Kullback-Leibler, $^\ast$Pixel Variance, $^\dagger$Mutal Information); \textit{A.} - Aligned to Thorax (T) or Heart (H); 
    D. - Dropout, U. - U-Net, L. - Laplace, (C-1) UMSF $+$ D. Segmentation Head, (C-2) UMSF $+$ D.U.; \textbf{1st-ranked}, \underline{2nd-ranked}.}
    \label{tab:ood_auc}
        \resizebox{0.9\textwidth}{!}{ 
    \begin{tabular}{llcccccccccccc}
        & & & & & & & \multicolumn{6}{c}{Uncertainty - Calibration} \\
        \cmidrule{8-13}
        & && \multicolumn{4}{c}{Indiv. OOD Measures}& \multicolumn{4}{c}{Intra-Path} & \multicolumn{2}{c}{Inter-Path} \\
        \cmidrule(lr){4-7}\cmidrule(lr){8-11}\cmidrule(lr){12-13}
        \multicolumn{2}{l}{Data} & \textit{A.} & D.U.\cite{kendall2015bayesian} & L.U.\cite{10.1007/978-3-031-72111-3_33} & L.\cite{jiao2024usfm} & D.\cite{jiao2024usfm}  &  D.U.\cite{kendall2015bayesian} & L.U.\cite{10.1007/978-3-031-72111-3_33} & L.\cite{jiao2024usfm} & D.\cite{jiao2024usfm} &  C-1 & C-2 \\
         \cmidrule(lr){4-4} \cmidrule(lr){5-5} \cmidrule(lr){6-6}\cmidrule(lr){7-7}\cmidrule(lr){8-8} \cmidrule(lr){9-9}\cmidrule(lr){10-10}\cmidrule(lr){11-11}\cmidrule(lr){12-12}\cmidrule(lr){13-13}
        \multirow{4}{*}{2D} 
        & HC18&& $0.74^{\dagger}$ & $0.57^{\dagger}$ & $0.70^{\dagger}$ & $0.72^{\circ}$ & $0.84^{\ast}$ & $0.52^{\dagger}$& $0.71^{\dagger}$ & $0.74^{\dagger}$ & $\mathbf{1.00}^{\ast}$ & $\mathbf{1.00}^{\ast}$\\
        & 4CH && $0.61^{\circ}$ & $0.56^{\circ}$ & $0.57^{\dagger}$ & $0.61^{\circ}$ & $0.61^{\circ}$ & $0.51^{\circ}$ & $0.54^{\ast}$ & $0.67^{\ast}$ & $\mathbf{0.84}^{\ast}$ & $\underline{0.73}^{\ast}$\\
        & 4CH &T& $0.61^{\circ}$ & $0.54^{\circ}$ & $0.56^{\circ}$ & $0.56^{\circ}$& $\mathbf{0.89}^{\ast}$ & $0.53^{\ast}$ & $0.54^{\ast}$&  $0.72^{\ast}$ & $0.64^{\ast}$ & $\underline{0.82}^{\dagger}$\\
        & 4CH &H& $0.64^{\diamond}$ & $0.59^{\dagger}$ & $0.55^{\ast}$ & $0.56^{\ast}$& $\mathbf{0.85}^{\circ}$ &  $0.54^{\ast}$ & $0.51^{\circ}$& $0.64^{\circ}$ & $0.71^{\ast}$ & $\underline{0.82}^{\dagger}$ \\
        \midrule
        \multirow{3}{*}{Clip} 
        & HC18$^\star$ && $0.73^{\circ}$ & $0.57^{\dagger}$ & $0.69^{\dagger}$ & $0.70^{\circ}$ & $0.82^{\ast}$ & $0.53^{\dagger}$ & $0.73^{\circ}$ & $0.75^{\dagger}$ & $\underline{0.98^{\ast}}$ & $\mathbf{0.99^{\ast}}$\\
        & 4CH && $\mathbf{0.88^{\dagger}}$ & $0.77^{\dagger}$ & $0.71^{\ast}$ & $\underline{0.81^{\diamond}}$ & $\underline{0.81^{\circ}}$ & $0.58^{\dagger}$ & $0.55^{\circ}$ & $\mathbf{0.88^{\circ}}$& $\mathbf{0.88^{\circ}}$ & $\underline{0.81^{\circ}}$ \\
        & 4CH &H& $\underline{0.67^{\dagger}}$ & $\underline{0.67^{\circ}}$ & $0.66^{\circ}$ & $0.62^{\ast}$ & $\underline{0.67^{\circ}}$& $0.59^{\circ}$ & $\underline{0.67^{\circ}}$ & $\mathbf{0.73^{\ast}}$ & $\mathbf{0.73^{\ast}}$ & $\underline{0.67^{\dagger}}$ \\
    \end{tabular}}
\end{table}
\begin{figure}[h!]
    \centering
    \includegraphics[width=0.95\textwidth]{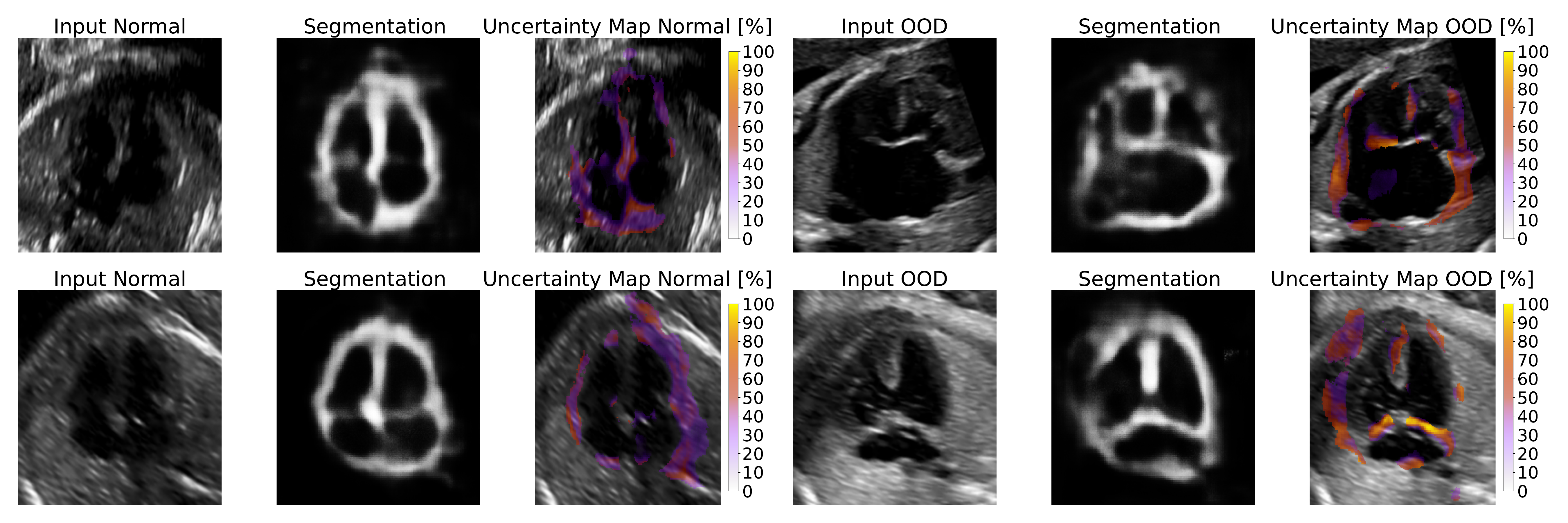}
    \caption{Segmentation and Enhanced Uncertainty maps $U_{fused}$ by L-FUSION for Normal and OOD (AVSD) cases in four-chamber view (4-CH) Heart-aligned; the segmented class corresponds to myocardium and valvular structures.Elevated uncertainty levels indicate increased model uncertainty, which may correspond to abnormalities or deviations from the normal training distribution.}
    \label{fig:quality_comparison}
\end{figure}

\noindent\textbf{Qualitative Results.}  
Dropout-based uncertainty maps are diffuse and over-segmented, missing fine details and limiting epistemic uncertainty capture (Fig.~\ref{fig:laplace-scheme}). Combining the foundation model with a Laplacian head improves segmentation and uncertainty detail but increases global uncertainty. L-FUSION fuses inter-path logit variance to enable counterfactual analysis and produce refined uncertainty maps (Fig.~\ref{fig:quality_comparison}). In both OOD cases, these maps reveal defects in septal formation and valve abnormalities, confirmed by diagnosis.

\noindent\textbf{Discussion.}  
Our method delivers efficient segmentation with fine-grained uncertainty quantification by combining Laplacian approximation and MC Dropout. With Uncertainty-Calibration, OOD detection AUC improves by up to $+20\%$. However, performance may drop when domain-specific features dominate or data lies on complex manifolds. Uncertainty estimates also rely on embedding quality; loss of local detail or high noise can degrade segmentation.

\section{Conclusion}  
Ultrasound segmentation has advanced but still lags behind CT and MRI in uncertainty quantification and consistency~\cite{1661695}. Although foundation models~\cite{chen2024multi,jiao2024usfm} boost accuracy, they often neglect uncertainty, crucial for clinical safety. Our approach, combining a foundation encoder with a Laplacian head using fast Hessian approximation, a Dropout unit, and uncertainty calibration, reduces model complexity while improving performance. It enhances segmentation accuracy and produces localised uncertainty maps, increasing reliability, aiding abnormality detection, and supporting safer clinical decisions. This superior uncertainty quantification surpasses deterministic and Dropout U-Nets, helping sonographers identify low-confidence areas and flag abnormalities for expert review.

\begin{credits}
\subsubsection{\ackname}
Support was received from the ERC project MIA-NORMAL 101083647, the State of Bavaria (HTA) and DFG 512819079. HPC resources were provided by NHR@FAU of FAU Erlangen-N\"urnberg under the NHR project b180dc. NHR@FAU hardware is partially funded by the DFG – 440719683. We thank the volunteers and sonographers at St. Thomas’ Hospital London.
We  gratefully acknowledge financial support from the Wellcome Trust IEH 102431, EPSRC (EP/S022104/1, EP/S013687/1), EPSRC Centre for Medical Engineering [WT 203148/Z/16/Z], the National Institute for Health Research (NIHR) Biomedical Research Centre (BRC) based at Guy’s and St Thomas’ NHS Foundation Trust and King’s College London and supported by the NIHR Clinical Research Facility (CRF) at Guy’s and St Thomas’. This work was also supported by the UK Research and Innovation AI Centre for Value Based Healthcare.
\end{credits}

%
%
%
\bibliographystyle{splncs04}
\bibliography{bibliography}
%


\end{document}